\newtheorem{lemma}{Lemma}
\newtheorem{definition}{Definition}
\newtheorem{corollary}{Corollary}
\newtheorem{remark}{Remark}
\begin{document}


%
\title{On the Construction of Jointly Superregular Lower Triangular Toeplitz Matrices}



%
\author{\IEEEauthorblockN{Jonas Hansen\IEEEauthorrefmark{1}\IEEEauthorrefmark{2},
Jan {\O}stergaard\IEEEauthorrefmark{1},
Johnny Kudahl\IEEEauthorrefmark{2}, and
John H. Madsen\IEEEauthorrefmark{2}}
\IEEEauthorblockA{\IEEEauthorrefmark{1}Department of Electronic Systems, Aalborg University, Denmark\\}
\IEEEauthorblockA{\IEEEauthorrefmark{2}Bang \& Olufsen A/S, Struer, Denmark\\}
\IEEEauthorblockA{\{jh, jo\}@es.aau.dk, \{joy, jhm\}@bang-olufsen.dk}
\thanks{The work of J. Hansen was partially supported by The Danish National Innovation Foundation, Grant No. 4135-00131B.}
\thanks{The work of J. {\O}stergaard was partially supported by VILLUM FONDEN Young Investigator Programme, Project No. 10095.}
\thanks{\textcopyright\ 2016 IEEE. Personal use of this material is permitted. Permission from IEEE must be obtained for all other uses, in any current or future media, including reprinting/republishing this material for advertising or promotional purposes, creating new collective works, for resale or redistribution to servers or lists, or reuse of any copyrighted component of this work in other works.}
}


\maketitle

\begin{abstract}
Superregular matrices have the property that all of their submatrices, which can be full rank are so.
Lower triangular superregular matrices are useful for e.g., maximum distance separable convolutional codes as well as for~(sequential) network codes.
In this work, we provide an explicit design for all superregular lower triangular Toeplitz matrices in~$\mathrm{GF}(2^p)$ for the case of matrices with dimensions less than or equal to~$5\times 5$.
For higher dimensional matrices, we present a greedy algorithm that
finds a solution provided the field size is sufficiently high.
We also introduce the notions of jointly superregular and product
preserving jointly superregular matrices, and extend our explicit
constructions of superregular matrices to these cases.
Jointly superregular matrices are necessary to achieve optimal
decoding capabilities for the case of codes with a rate lower
than~$\nicefrac{1}{2}$, and the product preserving property is
necessary for optimal decoding capabilities in network recoding.
\end{abstract}


%
\IEEEpeerreviewmaketitle




\section{Introduction}
Wireless networks are used more and more for the streaming of audio and video data.
Generally, wireless packet--based streaming requires some amount of forward erasure correction in order to cope with packet erasures and latency constraints.
In a streaming context, erasure correcting codes and reliable transport protocols have been investigated in e.g.,~\cite{10.1109/ITCC.2001.918813,6620377,4427233,6284055}.
Erasure correcting codes are either applied as a block code on consecutive blocks of the incoming data or as a convolutional code that sequentially process the incoming data packets.
If the block code is lower triangular, it can be used sequentially on the incoming data in the same manner as a convolutional code.
Decoding can also be done sequentially as data packets are received, and, thus, the latency can be kept low.
If further coding is allowed within the network, and not only at the edges, it is usually referred to as network coding.
Besides enhanced reliability, network coding can offer increased throughput and security and has been successfully applied in various communication scenarios~\cite{6692495,5934877,6825089,7263352}.

Convolutional codes or lower triangular block codes may be constructed using a random linear code.
One of the benefits of random linear codes is simplicity, e.g., with respect to coordination between nodes.
Furthermore, for large field sizes and code dimensions, optimal decoding capabilities can often be proven at least asymptotically.
On the other hand, for small field sizes and small code dimensions, it is generally hard to guarantee optimal decoding capabilities, and the need for coordination usually implies that the resulting codes, if used as network codes, suffer from high overhead requirements~\cite{5963013}.

Coding matrices in small dimensions are of great interest for streaming applications.
The advantage of using small matrices are twofold.
First, they can be decoded with generic decoding algorithms such as Gaussian elimination, even on embedded devices, despite the cubic complexity of the algorithm.
Second, the small dimension allows for the construction of coding matrices that are guaranteed to be optimal in the non--asymptotic regime, and with memory requirements and field sizes that are feasible for encoding and decoding on embedded devices.
The implementation of~$\mathrm{GF}(2^p)$ arithmetic is also straightforward on digital devices, since they are based on binary processors.
This makes it feasible to implement high--performance~$\mathrm{GF}(2^p)$ arithmetic.
It is particular useful to use elements of~$\mathrm{GF}(2^8)$ since they can each be represented exactly by a single byte.

Both convolutional codes and lower triangular block codes may be constructed from lower triangular matrices.
In the latter case, we show in Fig.~\ref{fig:matrix_structure}, examples of rate~$\nicefrac{1}{2}$ and rate~$\nicefrac{1}{3}$ codes obtained by concatenating two or three lower triangular matrices, respectively.
In particular, let~$A$ be an~$n \times k$ coding matrix, where~$A$ can e.g., be illustrated as in Figs.~\ref{fig:matrix_structure}\subref{fig:matrix_structure_4x8_channel} and~\ref{fig:matrix_structure}\subref{fig:matrix_structure_4x12_channel}.
The rate of the code is given by~$k/n$.
Let~$S$ be the~$k \times l$ source data matrix, and let~$C$ be the~$n \times l$ coded data matrix, i.e., the output of the error correcting code.
Then~$C = AS$, which implies that the~$k$ source vectors of dimension~$l$ are encoded into~$n$ coded vectors each of length~$l$.
The matrices shown in Figs.~\ref{fig:matrix_structure}\subref{fig:matrix_structure_4x8_math} and~\ref{fig:matrix_structure}\subref{fig:matrix_structure_4x12_math}, contain the same rows as those in Figs.~\ref{fig:matrix_structure}\subref{fig:matrix_structure_4x8_channel} and~\ref{fig:matrix_structure}\subref{fig:matrix_structure_4x12_channel}.
However, the rows are ordered differently to better illustrate that the source vectors can be processed sequentially as they appear.
The use of the identity matrix as a code matrix yields a systematic code.
The benefit of using two concatenated~$m \times m$ coding matrices instead of one~$2m \times m$ matrix is twofold.
First, the entire coding matrix needs to preserve the low latency property, this is straightforward for the two square matrices by having them be lower triangular.
This property is not well defined for a tall matrix.
Second, in a multipath network the two square matrices may be used on different paths.
Splitting a tall matrix and using it in two different paths in a network is not desirable.

If an~$m \times m$ lower triangular matrix is superregular, then it is also an optimal block code, i.e., it has optimal decoding capabilities~\cite{1023698}.
A lower triangular matrix is superregular, if and only if all of its proper submatrices are non--singular~\cite{1580796}.
It was shown in~\cite{1580796} that MDS convolutional codes can be constructed from lower triangular superregular matrices.
Thus, it is of great interest to find a way to construct superregular lower triangular matrices in small dimensions and with small field sizes.
This is, however, an open problem.
In~\cite{1580796}, a few of such matrices were shown without providing insights to how they were obtained.
In~\cite{aissen1952}, an explicit construction for superregular~(totally positive) matrices was provided for real and complex fields.
This construction can easily be extended to very large prime fields, which is impractical.
In~\cite{Almeida20132145} a new class of lower block triangular matrices that are superregular over a sufficiently large field was presented.

In this paper, we provide an explicit design for all superregular lower triangular Toeplitz matrices in~$\mathrm{GF}(2^p)$ for the case of matrices with dimensions less than or equal to~$5\times 5$.
For general dimensions, we propose a greedy approach to design the lower triangular superregular Toeplitz matrices.

By concatenating the identity matrix and~$m-1$ code matrices, a rate~$1/m$ code is obtained.
Codes with a rate lower than~$\nicefrac{1}{2}$ are of concern in various applications such as audio/video streaming.
For example, it may be used in a streaming context when the underlying erasure channel suffers from a significant amount of erasures or in one--to--many scenarios such as broadcast erasure channels with limited feedback options.
Unfortunately, even if all the~$m-1$,~$m > 2$ individual code matrices are superregular, it is not guaranteed that their concatenation with the identity matrix yields an optimal~$1/m$ rate code.
To this end, we introduce the notion of jointly superregular matrices.
The use of two jointly superregular matrices maximizes the decoding capabilities, see Definition~\ref{def:jointly_super}.
With this stronger notion of superregularity, optimal decoding capabilities can be obtained for any~$1/m$ rate code.
We provide explicit constructions for such lower triangular matrices in small dimensions and any field~$\mathrm{GF}(2^p)$.

\begin{figure}[tb]
\newcommand{\CodeStructureLength}{7.7mm}
\subfloat[]{  \raisebox{-\height}{\begin{tikzpicture}[>=stealth', ,shorten >=1pt, ,shorten <=1pt, auto, semithick, , scale=0.5, every node/.style={scale=0.5}]
    \tikzstyle{every state}=[rectangle, fill=white,text=black,minimum height=\CodeStructureLength, minimum width=\CodeStructureLength, node distance=\CodeStructureLength]

    \node[state]  (L21) {1};
    \node[state, below of=L21, xshift=\CodeStructureLength] (L22) {1};
    \node[state, below of=L22, xshift=\CodeStructureLength] (L23) {1};
    \node[state, below of=L23, xshift=\CodeStructureLength] (L24) {1};

    \node[state, below of=L21, yshift=-3*\CodeStructureLength] (L25) {1};
    \node[state, below of=L25, draw=none, xshift=\CodeStructureLength] (L26a) {1};
    \node[state, left of=L26a, draw=none] (L26b) {$\omega^{i_1}$};
    \node[state, right of=L26b, fill=none, minimum width=2*\CodeStructureLength, xshift=-\CodeStructureLength/2] (L26) {};

    \node[state, below of=L26a, draw=none, xshift=\CodeStructureLength] (L27a) {1};
    \node[state, left of=L27a, draw=none] (L27b) {$\omega^{i_1}$};
    \node[state, left of=L27b, draw=none] (L27c) {$\omega^{i_2}$};
    \node[state, right of=L27c, fill=none, minimum width=3*\CodeStructureLength] (L27) {};

    \node[state, below of=L27a, draw=none, xshift=\CodeStructureLength] (L28a) {1};
    \node[state, left of=L28a, draw=none] (L28b) {$\omega^{i_1}$};
    \node[state, left of=L28b, draw=none] (L28c) {$\omega^{i_2}$};
    \node[state, left of=L28c, draw=none] (L28d) {$\omega^{i_3}$};
    \node[state, right of=L28c, fill=none, minimum width=4*\CodeStructureLength, xshift=-\CodeStructureLength/2] (L28) {};

  \end{tikzpicture}}\label{fig:matrix_structure_4x8_math}}\hfill
\subfloat[]{  \raisebox{-\height}{\begin{tikzpicture}[>=stealth', ,shorten >=1pt, ,shorten <=1pt, auto, semithick, , scale=0.5, every node/.style={scale=0.5}]
    \tikzstyle{every state}=[rectangle, fill=white,text=black,minimum height=\CodeStructureLength, minimum width=\CodeStructureLength, node distance=\CodeStructureLength]

    \node[state]                (L11) {1};

    \node[state, below of=L11]  (L12) {1};

    \node[state, below of=L12, xshift=\CodeStructureLength] (L13) {1};

    \node[state, below of=L13, draw=none] (L14a) {1};
    \node[state, left of=L14a, draw=none] (L14b) {$\omega^{i_1}$};
    \node[state, right of=L14b, fill=none, minimum width=2*\CodeStructureLength, xshift=-\CodeStructureLength/2] (L14) {};

    \node[state, below of=L14a, xshift=\CodeStructureLength] (L15) {1};

    \node[state, below of=L15, draw=none] (L16a) {1};
    \node[state, left of=L16a, draw=none] (L16b) {$\omega^{i_1}$};
    \node[state, left of=L16b, draw=none] (L16c) {$\omega^{i_2}$};
    \node[state, right of=L16c, fill=none, minimum width=3*\CodeStructureLength] (L16) {};

    \node[state, below of=L16a, xshift=\CodeStructureLength] (L17) {1};

    \node[state, below of=L17, draw=none] (L18a) {1};
    \node[state, left of=L18a, draw=none] (L18b) {$\omega^{i_1}$};
    \node[state, left of=L18b, draw=none] (L18c) {$\omega^{i_2}$};
    \node[state, left of=L18c, draw=none] (L18d) {$\omega^{i_3}$};
    \node[state, right of=L18c, fill=none, minimum width=4*\CodeStructureLength, xshift=-\CodeStructureLength/2] (L18) {};

  \end{tikzpicture}}\label{fig:matrix_structure_4x8_channel}}\hfill
\subfloat[]{  \raisebox{-\height}{\begin{tikzpicture}[>=stealth', ,shorten >=1pt, ,shorten <=1pt, auto, semithick, , scale=0.5, every node/.style={scale=0.5}]
    \tikzstyle{every state}=[rectangle, fill=white,text=black,minimum height=\CodeStructureLength, minimum width=\CodeStructureLength, node distance=\CodeStructureLength]

    \node[state]  (L21) {1};
    \node[state, below of=L21, xshift=\CodeStructureLength] (L22) {1};
    \node[state, below of=L22, xshift=\CodeStructureLength] (L23) {1};
    \node[state, below of=L23, xshift=\CodeStructureLength] (L24) {1};

    \node[state, below of=L21, yshift=-3*\CodeStructureLength] (L25) {1};
    \node[state, below of=L25, draw=none, xshift=\CodeStructureLength] (L26a) {1};
    \node[state, left of=L26a, draw=none] (L26b) {$\omega^{i_{a_1}}$};
    \node[state, right of=L26b, fill=none, minimum width=2*\CodeStructureLength, xshift=-\CodeStructureLength/2] (L26) {};

    \node[state, below of=L26a, draw=none, xshift=\CodeStructureLength] (L27a) {1};
    \node[state, left of=L27a, draw=none] (L27b) {$\omega^{i_{a_1}}$};
    \node[state, left of=L27b, draw=none] (L27c) {$\omega^{i_{a_2}}$};
    \node[state, right of=L27c, fill=none, minimum width=3*\CodeStructureLength] (L27) {};

    \node[state, below of=L27a, draw=none, xshift=\CodeStructureLength] (L28a) {1};
    \node[state, left of=L28a, draw=none] (L28b) {$\omega^{i_{a_1}}$};
    \node[state, left of=L28b, draw=none] (L28c) {$\omega^{i_{a_2}}$};
    \node[state, left of=L28c, draw=none] (L28d) {$\omega^{i_{a_3}}$};
    \node[state, right of=L28c, fill=none, minimum width=4*\CodeStructureLength, xshift=-\CodeStructureLength/2] (L28) {};

    \node[state, below of=L21, yshift=-7*\CodeStructureLength] (L35) {1};
    \node[state, below of=L35, draw=none, xshift=\CodeStructureLength] (L36a) {1};
    \node[state, left of=L36a, draw=none] (L36b) {$\omega^{i_{b_1}}$};
    \node[state, right of=L36b, fill=none, minimum width=2*\CodeStructureLength, xshift=-\CodeStructureLength/2] (L36) {};

    \node[state, below of=L36a, draw=none, xshift=\CodeStructureLength] (L37a) {1};
    \node[state, left of=L37a, draw=none] (L37b) {$\omega^{i_{b_1}}$};
    \node[state, left of=L37b, draw=none] (L37c) {$\omega^{i_2}$};
    \node[state, right of=L37c, fill=none, minimum width=3*\CodeStructureLength] (L37) {};

    \node[state, below of=L37a, draw=none, xshift=\CodeStructureLength] (L38a) {1};
    \node[state, left of=L38a, draw=none] (L38b) {$\omega^{i_{b_1}}$};
    \node[state, left of=L38b, draw=none] (L38c) {$\omega^{i_{b_2}}$};
    \node[state, left of=L38c, draw=none] (L38d) {$\omega^{i_{b_3}}$};
    \node[state, right of=L38c, fill=none, minimum width=4*\CodeStructureLength, xshift=-\CodeStructureLength/2] (L38) {};

  \end{tikzpicture}}\label{fig:matrix_structure_4x12_math}}\hfill
\subfloat[]{  \raisebox{-\height}{\begin{tikzpicture}[>=stealth', ,shorten >=1pt, ,shorten <=1pt, auto, semithick, , scale=0.5, every node/.style={scale=0.5}]
    \tikzstyle{every state}=[rectangle, fill=white,text=black,minimum height=\CodeStructureLength, minimum width=\CodeStructureLength, node distance=\CodeStructureLength]

    \node[state]  (L21) {1};

    \node[state, below of=L21] (L22) {1};

    \node[state, below of=L22] (L23) {1};

    \node[state, below of=L23, xshift=\CodeStructureLength] (L24) {1};

    \node[state, below of=L24, draw=none] (L25a) {1};
    \node[state, left of=L25a, draw=none] (L25b) {$\omega^{i_{a_1}}$};
    \node[state, right of=L25b, fill=none, minimum width=2*\CodeStructureLength, xshift=-\CodeStructureLength/2] (L25) {};

    \node[state, below of=L25a, draw=none] (L26a) {1};
    \node[state, left of=L26a, draw=none] (L26b) {$\omega^{i_{b_1}}$};
    \node[state, right of=L26b, fill=none, minimum width=2*\CodeStructureLength, xshift=-\CodeStructureLength/2] (L26) {};

    \node[state, below of=L26a, xshift=\CodeStructureLength] (L27) {1};

    \node[state, below of=L27, draw=none] (L28a) {1};
    \node[state, left of=L28a, draw=none] (L28b) {$\omega^{i_{a_1}}$};
    \node[state, left of=L28b, draw=none] (L28c) {$\omega^{i_{a_2}}$};
    \node[state, right of=L28c, fill=none, minimum width=3*\CodeStructureLength] (L28) {};

    \node[state, below of=L28a, draw=none] (L29a) {1};
    \node[state, left of=L29a, draw=none] (L29b) {$\omega^{i_{b_1}}$};
    \node[state, left of=L29b, draw=none] (L29c) {$\omega^{i_{b_2}}$};
    \node[state, right of=L29c, fill=none, minimum width=3*\CodeStructureLength] (L29) {};

    \node[state, below of=L29a, xshift=\CodeStructureLength] (L30) {1};

    \node[state, below of=L30, draw=none] (L31a) {1};
    \node[state, left of=L31a, draw=none] (L31b) {$\omega^{i_{a_1}}$};
    \node[state, left of=L31b, draw=none] (L31c) {$\omega^{i_{a_2}}$};
    \node[state, left of=L31c, draw=none] (L31d) {$\omega^{i_{a_3}}$};
    \node[state, right of=L31c, fill=none, minimum width=4*\CodeStructureLength, xshift=-\CodeStructureLength/2] (L31) {};

    \node[state, below of=L31a, draw=none] (L32a) {1};
    \node[state, left of=L32a, draw=none] (L32b) {$\omega^{i_{b_1}}$};
    \node[state, left of=L32b, draw=none] (L32c) {$\omega^{i_{b_2}}$};
    \node[state, left of=L32c, draw=none] (L32d) {$\omega^{i_{b_3}}$};
    \node[state, right of=L32c, fill=none, minimum width=4*\CodeStructureLength, xshift=-\CodeStructureLength/2] (L32) {};

  \end{tikzpicture}}\label{fig:matrix_structure_4x12_channel}}
\caption{The matrix structure used throughout this paper.~(a) and~(b) are rate~$\nicefrac{1}{2}$ codes.~(c) and~(d) are rate~$\nicefrac{1}{3}$ codes.~(a) and~(c) are the matrices used in the lemmas. Whereas,~(b) and~(d) are the matrices used on a erasure channel in an implementation.~(c) should be constructed using the identity matrix and two jointly superregular lower triangular Toeplitz matrices.}
\label{fig:matrix_structure}
\vspace{-4mm} 
\end{figure}
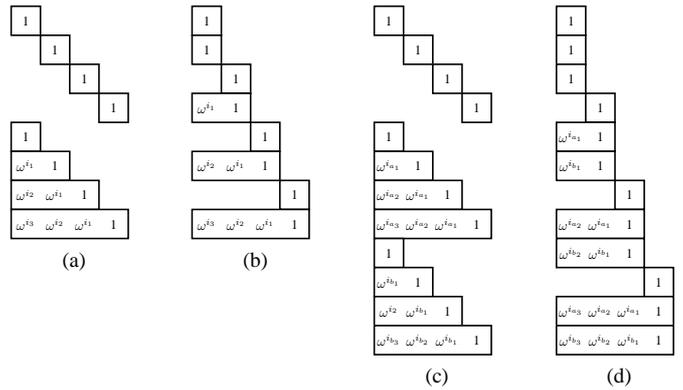

In ad--hoc and peer--to--peer networks, such as machine--to--machine communication or Internet of things, it is becoming more and more relevant to recode at intermediate nodes.
Recoding in network coding basically corresponds to multiplication of different coding matrices.
However, the resulting coding matrix obtained by multiplying two (jointly) superregular matrices is not guaranteed to be superregular.
We therefore introduce the notion of product preserving jointly superregular matrices.
In particular, given a pair of jointly superregular matrices, say~$A$ and~$B$, where~$A$ is used for encoding at the source and~$B$ is used at an intermediate node in the network to perform recoding.
Maximum decoding capabilities at the end--node is achieved if and only if~$AB$ or~$BA$ is superregular, which is guaranteed if~$A$ and~$B$ are product preserving jointly superregular matrices.
We provide a few explicit constructions for product preserving jointly superregular matrices in small dimensions and any field~$\mathrm{GF}(2^p)$.

\section{Superregular Matrices}
In a slightly different context the authors of~\cite{45291} define a dense matrix to be superregular if and only if every square submatrix is non--singular.
This definition of superregularity is extended in~\cite{1580796} to lower triangular matrices.
That is, a lower triangular matrix is superregular if and only if all of its proper submatrices are non--singular~\mbox{\cite[Definition 3.3]{1580796}}.
Let~$A$ be an~$m \times m$ lower triangular Toeplitz matrix with all the elements in the first column being non--zero.
Let~\mbox{$A' = A^{j_1,\dotsc,j_r}_{h_1,\dotsc,h_r}$} be a~$r \times r$ submatrix of~$A$.
Where~$A'$ is constructed using the rows and columns of~$A$ with indices~$j_1,\dotsc,j_r$ and~$h_1,\dotsc,h_r$, respectively~\mbox{\cite[Definition 3.2]{1580796}}.
Then,~$A'$ is a proper submatrix of~$A$ if and only if~\mbox{$1 \leq j_1 < j_2 < \dotsc < j_r \leq m$},~\mbox{$1 \leq h_1 < h_2 < \dotsc < h_r \leq m$} and~\mbox{$j_t \geq h_t, \forall t$}.
We adopt this notion of superregularity since it maximises the decoding capability~\cite{1023698}, when a superregular matrix is used in a code with rate~$\nicefrac{1}{2}$.
This notion of superregularity is somewhat different from the notion used in~\cite{7282860}.
Naturally, a code with rate greater than~$\nicefrac{1}{m}$ for some~$m$ can be generated through puncturing.

A code with rate~$\nicefrac{1}{3}$ can be constructed by using two jointly superregular matrices.
Naturally, two jointly superregular matrices are individually superregular.
The following definition describes the notion of joint superregularity.
The essential part of the definition is that any square submatrix formed by any combination of the two matrices that can be non--singular must also be non--singular.

\begin{definition}[Joint superregularity]
\label{def:jointly_super}
Two superregular~$t \times t$ matrices are said to be jointly superregular if and only if all of the proper submatrices of any~$t \times t$ matrix, formed by taking~$l = \{1,\dotsc,t-1\}$ and~$t-l$ rows from the two matrices, respectively,  are non--singular.
In the context of jointly superregular matrices, a proper submatrix is any square matrix that is not trivially rank deficient.
An~$m \times m$ matrix, when sorted by increasing row support\footnote{The support of a vector is equal to its number of non--zero elements.}, is said to be trivially rank deficient if the support of row~$i$,~$1 \leq i \leq m$, is less than~$i$.
A proper submatrix need not be triangular.
\hfill$\triangle$
\end{definition}

\begin{table}[tb]
\caption{The amount of~$5 \times 5$ superregular lower triangular Toeplitz matrices for different values of~$p$.}
\label{tab:sup_5}
\centering
\begin{tabular}{l|l|l|l}
$p$ & $f_p(\omega)$ & Lemma~\ref{lem:super}(iii) & Corollary~\ref{cor:sup_5} \\\hline
2 & $\omega^2 + \omega + 1$ & 0 & 0 \\\hline
3 & $\omega^3 + \omega + 1$ & 84 & 0 \\\hline
4 & $\omega^4 + \omega + 1$ & 17280 & 9 \\\hline
5 & $\omega^5 + \omega^2 + 1$ & 582180 & 2011 \\\hline
6 & $\omega^6 + \omega + 1$ & 12700800 & 76506 \\\hline
7 & $\omega^7 + \omega^3 + 1$ & 233847322 & 1234973 \\\hline
8 & $\omega^8 + \omega^4 + \omega^3 + \omega^2 + 1$ & 2000121984 & 17274832
\end{tabular}
\end{table}

\begin{definition}[Product preserving jointly superregular]
Two jointly superregular matrices are product preserving if and only if their product is a superregular matrix.
\hfill$\triangle$
\end{definition}

Let~$\Omega_{f_p}$ denote the set of roots of a primitive polynomial~$f_p$, which generates~$\mathrm{GF}(2^p)$. Let~\mbox{$\mathcal{I}_{n} = \{i_1,\dotsc, i_{n} | i_j \in \mathrm{GF}(2^p), i_j \neq 2^p-1, \forall j\}$}.
Let~\mbox{$\omega \in \Omega_{f_p}$} and let~\mbox{$\mathcal{A}_n^{\omega}$} denote the set of all~$n\times n$ superregular lower triangular Toeplitz matrices with their first column given by~\mbox{$[1,\omega^{i_1}, \omega^{i_2}, \dotsc, \omega^{i_{n-1}}]^T$}, where~\mbox{$(i_1,\dotsc, i_{n-1})\in \mathcal{I}_{n-1}$}.
Let~\mbox{$A_{n-1}\in \mathcal{A}_{n-1}^{\omega}$} and let~\mbox{$\phi_{\omega,i}\left({A}_{n-1}\right)$} denote an~$n\times n$ matrix obtained by extending~$A_{n-1}$ below and to the right by the row vector~\mbox{$[\omega^{i_{n-1}}, \omega^{i_{n-2}}, \dotsc, \omega^{i_{1}}, 1]$} and column vector~\mbox{$[0, \dotsc, 0, 1]^T$}, respectively, so that~\mbox{$\phi_{\omega,i}\left({A}_{n-1}\right)$} is lower triangular and Toeplitz.
Finally, let~$\psi_\omega\left(i_1,\dotsc,i_{n-1}\right)$ be an~$n \times n$ lower triangular Toeplitz matrix having the first column given by~$\left[1, \omega^{i_1},\dotsc,\omega^{i_{n-1}}\right]^T$.

Let~$\mathcal{B}_n^\omega$ denote the set of all pairs of jointly superregular matrices according to Definition~\ref{def:jointly_super}.
For two jointly superregular matrices,~$A_n$ and~$B_n$, we use the subscripts~$a$ and~$b$ to distinguish between their elements.
For~$(A_n, B_n) \in \mathcal{B}_n^{\omega}$, the first columns of~$A_n$ and~$B_n$ are given by~$[1,\omega^{i_{a_1}}, \dotsc, \omega^{i_{a_{n-1}}}]^T$ and~$[1,\omega^{i_{b_1}}, \dotsc, \omega^{i_{b_{n-1}}}]^T$, respectively, where~$(i_{a_1}, \dotsc, i_{a_{n-1}}, i_{b_1}, \dotsc, i_{b_{n-1}}) \in \mathcal{I}_{2n-2}$.

Let~$(A_n,B_n)=\phi_{\omega,i_a,i_b}(A_{n-1}, B_{n-1}) = (\phi_{\omega,i_a}(A_{n-1}), \phi_{\omega,i_b}(B_{n-1}))$ be the pair of~$n \times n$ matrices obtained by extending~$A_{n-1}$ and~$B_{n-1}$ using the straightforward generalization of the~$\phi$--operator for a single matrix.

In~\cite{7282860}, a construction of matrices that preserve superregularity after multiplication with block diagonal matrices was constructed.
In our case, the product of two superregular matrices is not guaranteed to be a superregular matrix.
Note that the multiplication (from the right) in~\cite{7282860} is different as the matrices have entries in different fields.
\begin{lemma}
Given~$A_n \in \mathcal{A}_n^\omega$, then~$\exists A_n' \in \mathcal{A}_n^\omega$ such that their product~$A_nA_n' \notin \mathcal{A}_n^\omega$.
\hfill$\triangle$
\end{lemma}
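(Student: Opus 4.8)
The plan is to exploit the characteristic-two arithmetic of~$\mathrm{GF}(2^p)$ together with the fact that lower triangular Toeplitz matrices behave like truncated polynomials. Recall that the~$n\times n$ lower triangular Toeplitz matrices over~$\mathrm{GF}(2^p)$ form a commutative algebra: identifying such a matrix with the polynomial whose coefficients are its first column, matrix multiplication corresponds to polynomial multiplication modulo~$x^n$. Hence the first column of a product~$A_nA_n'$ is the truncated convolution of the first columns of~$A_n$ and~$A_n'$, and in particular~$A_nA_n'$ is again lower triangular Toeplitz. Membership in~$\mathcal{A}_n^\omega$ requires, on top of superregularity, that every entry of the first column be a power of~$\omega$, i.e.\ be nonzero. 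Therefore it suffices to produce a single~$A_n'\in\mathcal{A}_n^\omega$ for which this truncated convolution has a zero entry strictly below the diagonal; that alone places the product outside~$\mathcal{A}_n^\omega$, and no superregularity check on the product itself is needed.

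For the concrete witness I would simply take~$A_n'=A_n$, which is legitimate since~$A_n\in\mathcal{A}_n^\omega$ by hypothesis and the statement is nontrivial only for~$n\ge 2$. Writing the first column of~$A_n$ as~$[1,\omega^{i_1},\dotsc,\omega^{i_{n-1}}]^T$, the~$(2,1)$ entry of~$A_n^2$ is the degree-one coefficient of the squared polynomial, namely~$1\cdot\omega^{i_1}+\omega^{i_1}\cdot 1 = 2\,\omega^{i_1}$. Because~$\mathrm{GF}(2^p)$ has characteristic two, this equals~$0$. (Equivalently, the Frobenius identity gives~$a(x)^2=a(x^2)$, so the square has no odd-degree terms at all.)

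Thus the first column of~$A_n^2$ has a zero immediately below the diagonal, so it cannot be of the form~$[1,\omega^{j_1},\dotsc,\omega^{j_{n-1}}]^T$ with every~$\omega^{j_t}\neq 0$, and hence~$A_n^2\notin\mathcal{A}_n^\omega$, as required. There is essentially no obstacle in this argument: the characteristic-two cancellation makes the vanishing entry immediate, and the only points demanding care are the correct reading of~$\notin\mathcal{A}_n^\omega$ (a single zero first-column entry already disqualifies the product, so superregularity of~$A_n^2$ never enters) and the harmless edge case~$n\ge 2$. If a witness distinct from~$A_n$ is preferred, the same computation shows that any~$A_n'\in\mathcal{A}_n^\omega$ whose subdiagonal entry equals~$\omega^{i_1}$ works equally well, since only the equality of the two subdiagonal entries is used to force the cancellation.
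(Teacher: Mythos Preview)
Your proof is correct and takes a genuinely different route from the paper's. The paper instead chooses $A_n'=A_n^{-1}$, invoking \cite[Corollary~3.6]{1580796} to assert that the inverse of a superregular lower triangular Toeplitz matrix is again superregular, so that $A_n^{-1}\in\mathcal{A}_n^\omega$ and then $A_nA_n^{-1}=I_n\notin\mathcal{A}_n^\omega$. Your choice $A_n'=A_n$ avoids that external citation entirely: the characteristic-two identity $\omega^{i_1}\oplus\omega^{i_1}=0$ kills the $(2,1)$ entry of $A_n^2$, and that single zero in the first column already disqualifies the product from $\mathcal{A}_n^\omega$. The paper's argument is field-agnostic in spirit (the identity matrix is never superregular for $n\ge 2$), whereas yours is specific to characteristic two---which is entirely appropriate here since the paper works exclusively over $\mathrm{GF}(2^p)$---and has the advantage of being completely self-contained.
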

\begin{proof}
The proof follows easily from~\cite[Corollary 3.6]{1580796}. For any~$A_n \in \mathcal{A}_n^\omega$ then~$A_n^{-1} \in \mathcal{A}_n^\omega$ and it follows that~\mbox{$A_nA_n^{-1} = I_n \notin \mathcal{A}_n^\omega$}.
\end{proof}

Let~$\mathcal{C}_n^{\omega}$ denote the set of all pairs of~$n \times n$ product preserving jointly superregular lower triangular
Toeplitz matrices:
\begin{equation}
\mathcal{C}_n^{\omega} \triangleq \{ (A_n, B_n) \in \mathcal{B}_n^{\omega} : A_nB_n = B_nA_n \in \mathcal{A}_n^{\omega} \}, \omega \in \Omega_{f_p}.
\end{equation}

\section{Explicit construction of superregular and jointly superregular matrices}
In this section we first show methods for explicit construction of lower triangular Toeplitz superregular matrices of size~$n \times n$, where~$n \leq 5$.
Any matrix of size~$2 \times 2$ with~$i_1 \in \mathcal{I}_1$ is superregular over some~$\mathrm{GF}(2^p)$.
This follows easily from the definition since~$\omega^{i_1} \neq 0$.
In the following field operations on the elements of $\mathcal{I}_n$ are taken modulo~$2^p-1$.

\begin{lemma}
\label{lem:super}
Let~$\omega\in \Omega_{f_p}$ and $A_n = \psi_\omega(i_1,\dotsc,i_{n-1})$.
\begin{enumerate}
\item[i)]{Then~$A_3 \in \mathcal{A}_3^\omega$ if
and only if~$(i_1,i_2) \in \mathcal{I}_2$ and~$2i_1 \neq i_2$.}
\item[ii)]{Let~$A_{3} \in \mathcal{A}_3^\omega$ and~\mbox{$A_4 = \phi_{\omega,i_3}\left({A}_3\right)$}. Then~$A_4 \in \mathcal{A}_4^{\omega}$ if and only if,~\mbox{$(i_1,\dotsc, i_3) \in \mathcal{I}_{3}$} and satisfy:
\begin{align}
3i_1 \neq i_3, \quad i_1+i_2 \neq i_3, \quad 2i_2 \neq i_1+i_3.\label{eq:A4_eq}
\end{align}}
\item[iii)]{Let~$A_{4} \in \mathcal{A}_4^{\omega}$ and~\mbox{$A_5= \phi_{\omega,i_4}\left({A}_4\right)$}. Then~$A_5 \in \mathcal{A}_5^{\omega}$ if and only if,~\mbox{$(i_1,\dotsc, i_4) \in \mathcal{I}_4$} and satisfy:
\begin{equation}
\begin{split}
i_4 & \neq 2i_1 + i_2,\quad i_4 \neq i_1 + i_3, \\
i_4 & \neq 2i_2,\quad 2i_3 \neq i_2 + i_4,\quad i_2+i_3 \neq i_1 + i_4.\label{eq:A5_eq}
\end{split}
\end{equation}
and~$\omega$ and~$(i_1,\dotsc,i_4)$ jointly satisfy:
\begin{align}
\begin{split}
0 \neq & \omega^{2i_2+i_1} \oplus \omega^{i_2+i_3} \oplus \omega^{2i_1+i_3} \oplus \omega^{i_1+i_4}, \\
0 \neq & \omega^{2i_1+i_4} \oplus \omega^{i_2+i_4} \oplus \omega^{3i_2} \oplus \omega^{2i_3}, \\
0 \neq & \omega^{2i_1+i_2} \oplus \omega^{i_1+i_3} \oplus \omega^{2i_2} \oplus \omega^{i_4}, \\
0 \neq & \omega^{2i_1+i_2} \oplus \omega^{4i_1} \oplus \omega^{2i_2} \oplus \omega^{i_4}.\label{eq:A5_eq_2}
\end{split}
\end{align}}
\end{enumerate}
\hfill$\triangle$
\end{lemma}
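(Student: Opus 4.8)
The plan is to invoke the characterization recalled before the statement, namely that $A_n$ is superregular exactly when every proper submatrix is non-singular, and to prove each part by enumerating the proper submatrices, evaluating their determinants over $\mathrm{GF}(2^p)$, and recording precisely when each is non-zero. The crucial first step is a structural reduction. Writing $A_n=\phi_{\omega,i_{n-1}}(A_{n-1})$, any proper submatrix whose largest row index is at most $n-1$ is itself a proper submatrix of $A_{n-1}$ and is non-singular by the hypothesis $A_{n-1}\in\mathcal{A}_{n-1}^\omega$. Moreover, a proper submatrix can use column $n$ only if it also uses row $n$, since the dominance condition $j_r\ge h_r$ forces $j_r=n$ as soon as $h_r=n$. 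Among the submatrices that do contain row $n$, those also containing column $n$ have their only non-zero entry of that column at position $(n,n)$, equal to $1$; cofactor expansion along column $n$ collapses such a determinant to that of a proper submatrix of $A_{n-1}$, again non-singular. Hence every genuinely new condition comes from a proper submatrix whose row set contains $n$ and whose column set lies in $\{1,\dots,n-1\}$, and it suffices to enumerate these.

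For part (i), membership $(i_1,i_2)\in\mathcal{I}_2$ already makes all subdiagonal entries non-zero, and the enumeration leaves a single non-trivial submatrix, rows $\{2,3\}$ and columns $\{1,2\}$, with determinant $\omega^{2i_1}\oplus\omega^{i_2}$; thus $A_3\in\mathcal{A}_3^\omega$ iff $2i_1\ne i_2$. For part (ii) the new submatrices have sizes $1,2,3$: the $1\times1$ cases are non-zero, each relevant $2\times2$ determinant is a binomial $\omega^{a}\oplus\omega^{b}$ (yielding $i_1+i_2\ne i_3$ and $2i_2\ne i_1+i_3$), while the only non-reducible $3\times3$ determinant, rows $\{2,3,4\}$ and columns $\{1,2,3\}$, equals $\omega^{3i_1}\oplus\omega^{i_3}$ because its two mixed Leibniz terms $\omega^{i_1+i_2}$ cancel in characteristic two, giving $3i_1\ne i_3$. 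Collecting the distinct binomials, and using Toeplitz invariance to discard repetitions, reproduces exactly \eqref{eq:A4_eq}; no field-dependent condition appears at this size, since every surviving determinant is a binomial.

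Part (iii) is the substantial case, and the main obstacle is precisely the bookkeeping: I would list every proper submatrix of $A_5$ containing row $5$ with columns in $\{1,2,3,4\}$, of sizes $1$ through $4$, evaluate each determinant, and track which Leibniz terms cancel so that it collapses to a monomial (automatically non-zero), a binomial (a linear exponent condition), or an irreducible four-term sum. The $1\times1$ and $2\times2$ determinants together with the reducible $3\times3$'s supply the binomial inequalities \eqref{eq:A5_eq}. Each $3\times3$ whose middle cross terms cancel leaves a four-term expression, for instance $\omega^{3i_2}\oplus\omega^{2i_1+i_4}\oplus\omega^{2i_3}\oplus\omega^{i_2+i_4}$ from rows $\{3,4,5\}$ and columns $\{1,2,3\}$; these produce the first three lines of \eqref{eq:A5_eq_2}. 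Every $4\times4$ submatrix whose top row is $(1,0,0,0)$ reduces by cofactor expansion to a $3\times3$ already treated, so the only new $4\times4$ determinant is that of rows $\{2,3,4,5\}$ and columns $\{1,2,3,4\}$; expanding it and cancelling the repeated $\omega^{i_1+i_3}$ terms gives $\omega^{4i_1}\oplus\omega^{2i_1+i_2}\oplus\omega^{2i_2}\oplus\omega^{i_4}$, the last line of \eqref{eq:A5_eq_2}. Necessity of every listed condition is immediate, as each equals a submatrix determinant; sufficiency follows because the enumeration is exhaustive, so that when all conditions hold every proper submatrix of $A_5$ is non-singular and $A_5\in\mathcal{A}_5^\omega$. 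The delicate point throughout is verifying that the enumeration is both complete and minimal, i.e.\ that Toeplitz symmetry has been used to identify all coincident determinants and that exactly the four irreducible four-term sums survive.
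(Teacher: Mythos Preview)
Your proposal is correct and follows essentially the same approach as the paper's own proof: enumerate the proper submatrices, compute their determinants, and exploit the characteristic-two cancellation $\omega^i\oplus\omega^i=0$ to reduce each determinant to a monomial, a binomial, or an irreducible four-term sum. The paper's appendix is extremely terse (it merely asserts that one ``only need[s] to check the determinants of the proper submatrices that include the new element'' and that the stated conditions follow), whereas you supply the explicit structural reduction---namely, that submatrices using both row $n$ and column $n$ collapse by cofactor expansion to proper submatrices of $A_{n-1}$, so the genuinely new constraints come only from submatrices with row $n$ and columns in $\{1,\dots,n-1\}$---and you verify representative determinants; this is exactly the computation the paper leaves implicit.
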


Lemma~\ref{lem:super} (whose proof can be found in the Appendix) provides necessary and sufficient conditions for superregularity.
For the case of only sufficient conditions, the four non--trivial equations in~\eqref{eq:A5_eq_2} can be replaced by a single equation as shown in Corollary~\ref{cor:sup_5}.
Table~\ref{tab:sup_5} shows the number of superregular lower triangular Toeplitz matrices.

\begin{corollary}
\label{cor:sup_5}
Let~$\omega\in \Omega_{f_p}$,~$A_{4} \in \mathcal{A}_4^\omega$, and~$A_5 = \phi_{\omega,i_4}\left({A}_4\right)$.
Then~$A_5 \in \mathcal{A}_5^{\omega}$ if~$(i_1,\dotsc,
i_4) \in \mathcal{I}_4$ and satisfy \eqref{eq:A5_eq} and~$\omega$ and~$(i_1,\dotsc,i_4)$ jointly satisfy:
\begin{align}
0 \neq & \omega^{a \cdot i_1 + b \cdot i_2} \oplus \omega^{c \cdot i_1 + i_4} \oplus \omega^{d \cdot i_1 + e \cdot i_3} \oplus \omega^{f \cdot i_2 + g \cdot i_3 + h \cdot i_4},
\end{align}
where:
\mbox{$a,c,e \in \{0,1,2\}$},
\mbox{$b \in \{1,2,3\}$},
\mbox{$d \in \{0,1,2,4\}$},
\mbox{$f \in \{1,2\}$}, and
\mbox{$g,h \in \{0,1\}$}.
\hfill$\triangle$
\end{corollary}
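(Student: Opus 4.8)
The plan is to deduce the Corollary directly from part~(iii) of Lemma~\ref{lem:super}, which already furnishes conditions that are necessary \emph{and} sufficient for $A_5 \in \mathcal{A}_5^\omega$. Since only a sufficient condition is claimed here, I read the single parametrized inequality in the statement as a universally quantified family: it is to hold for \emph{every} admissible tuple $(a,b,c,d,e,f,g,h)$ in the declared ranges. The goal is then to show that this family of inequalities implies each of the four non--trivial inequalities in~\eqref{eq:A5_eq_2}. Together with~\eqref{eq:A5_eq}, which is assumed outright, the sufficiency direction of Lemma~\ref{lem:super}(iii) will immediately give $A_5 \in \mathcal{A}_5^\omega$.

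The core of the argument is a term--by--term matching. For each of the four lines of~\eqref{eq:A5_eq_2} I would exhibit one specific tuple $(a,b,c,d,e,f,g,h)$, lying in the prescribed ranges, whose monomials $\omega^{a i_1 + b i_2}$, $\omega^{c i_1 + i_4}$, $\omega^{d i_1 + e i_3}$, and $\omega^{f i_2 + g i_3 + h i_4}$ coincide with the four monomials of that line. Concretely, the first equation is recovered by $(1,2,1,2,1,1,1,0)$, the second by $(0,3,2,0,2,1,0,1)$, the third by $(2,1,0,1,1,2,0,0)$, and the fourth by $(2,1,0,4,0,2,0,0)$. One simply reads off the exponents to verify each correspondence; for instance $b=3$ is needed only by the second equation, which is what forces the range $b \in \{1,2,3\}$, while $d=4$ is needed only by the fourth, forcing $d \in \{0,1,2,4\}$. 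This confirms that the ranges are exactly large enough to cover all four cases and no larger role is demanded of them.

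With the matching in hand the deduction is immediate: requiring the parametrized inequality for all admissible tuples forces it in particular for the four tuples above, hence forces all of~\eqref{eq:A5_eq_2}. Combining this with the assumed~\eqref{eq:A5_eq} and invoking Lemma~\ref{lem:super}(iii) yields $A_5 \in \mathcal{A}_5^\omega$, which is the assertion of the Corollary. Note that the converse fails, which is why only sufficiency is claimed: the universal family contains far more inequalities than the four of~\eqref{eq:A5_eq_2}, so a matrix can satisfy~\eqref{eq:A5_eq_2} yet violate the parametrized condition for some other tuple.

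I expect the only delicate point to be bookkeeping rather than mathematics. One must check that the three coefficient slots in $\omega^{f i_2 + g i_3 + h i_4}$ are flexible enough to realize both the mixed terms $\omega^{i_2 + i_3}$ and $\omega^{i_2 + i_4}$ and the pure power $\omega^{2 i_2}$ occurring in~\eqref{eq:A5_eq_2}, and symmetrically that the slot $\omega^{d i_1 + e i_3}$ can produce $\omega^{2 i_3}$, $\omega^{4 i_1}$, and $\omega^{i_1 + i_3}$. Verifying that each of the four target equations genuinely lives inside the declared parameter ranges, and that none of them forces a coordinate outside those ranges, is the single place where a careless enumeration could fail, so I would tabulate the four tuples as above and confirm membership coordinate by coordinate.
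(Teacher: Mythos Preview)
Your proposal is correct and is exactly the intended argument: the paper gives no explicit proof of the Corollary, merely noting that ``the four non--trivial equations in~\eqref{eq:A5_eq_2} can be replaced by a single equation,'' and your term--by--term matching makes this precise by exhibiting the four admissible tuples that recover each line of~\eqref{eq:A5_eq_2}. The verification of the tuples and their ranges is accurate, and your closing remark about the failure of the converse correctly explains why the Corollary is stated as only a sufficient condition.
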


\begin{remark}
Let~$\omega\in \Omega_{f_p}$.
If~$\psi_\omega\left(i_1,\dotsc,i_{n-1}\right) \in \mathcal{A}_n^{\omega}$ then~$\psi_{\omega'}\left(i_1,\dotsc,i_{n-1}\right) \in \mathcal{A}_n^{\omega}, \forall \omega'\in \Omega_{f_p}$.
\end{remark}

The two lemmas below,~\ref{lem:join_2} and~\ref{lem:join_3}, list the necessary and sufficient conditions for constructing jointly superregular lower triangular Toeplitz matrices of size~$2 \times 2$ and~$3 \times 3$, respectively.
Furthermore, Lemma~\ref{lem:join_2} also define a necessary condition for constructing jointly superregular lower triangular Toeplitz matrices of size~$n \times n$.

\begin{lemma}
\label{lem:join_2}
Let~$\omega\in \Omega_{f_p}$. For~$n = 2$,~$(A_2, B_2)\in \mathcal{B}_2^\omega$ if and only if,~$(i_{a_1},i_{b_1})\in \mathcal{I}_2$ and~$i_{a_1} \neq i_{b_1}$. For any~$n>1$, $(A_n, B_n) \notin \mathcal{B}_n^\omega$, if~$\exists j \in \{1,\dotsc,n-1\}$ such that~$i_{a_j} = i_{b_j}$.
\hfill$\triangle$
\end{lemma}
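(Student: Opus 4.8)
The plan is to handle the two assertions separately---the \emph{if and only if} characterization at $n=2$ and the necessary condition valid for all $n>1$---since both reduce to the same explicit $2\times 2$ determinant computation in $\mathrm{GF}(2^p)$. Throughout I would use that $\omega$ is primitive, so $\omega^x=\omega^y$ holds exactly when $x\equiv y\pmod{2^p-1}$, and that in characteristic two subtraction coincides with $\oplus$, so a $2\times 2$ determinant of the relevant shape vanishes precisely when its two lower-left entries are equal exponents of $\omega$.

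For $n=2$ I would first record that $A_2$ and $B_2$ are the lower triangular Toeplitz matrices with bottom rows $[\omega^{i_{a_1}},1]$ and $[\omega^{i_{b_1}},1]$, each individually superregular as soon as $(i_{a_1},i_{b_1})\in\mathcal{I}_2$, by the $2\times 2$ remark preceding Lemma~\ref{lem:super}. Joint superregularity then amounts, via Definition~\ref{def:jointly_super} with $t=2$ and $l=1$, to requiring every proper submatrix of a matrix built from one row of $A_2$ and one row of $B_2$ to be non-singular. I would enumerate the four row choices: the two top rows give $[1,0],[1,0]$, which is trivially rank deficient and hence excluded; the two mixed choices are non-singular; and only the two bottom rows yield $\begin{pmatrix}\omega^{i_{a_1}}&1\\\omega^{i_{b_1}}&1\end{pmatrix}$, with determinant $\omega^{i_{a_1}}\oplus\omega^{i_{b_1}}$ (the $1\times 1$ proper submatrices are just non-zero entries and are automatically non-singular). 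This determinant vanishes exactly when $i_{a_1}=i_{b_1}$, delivering both directions simultaneously.

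For the necessary condition at general $n$, suppose $i_{a_j}=i_{b_j}$ for some $j\in\{1,\dots,n-1\}$; I would exhibit one singular proper submatrix to contradict joint superregularity. Taking row $j+1$ of $A_n$ and row $j+1$ of $B_n$ restricted to columns $1$ and $j+1$ gives exactly $\begin{pmatrix}\omega^{i_{a_j}}&1\\\omega^{i_{b_j}}&1\end{pmatrix}$, because the column-$1$ entry of row $j+1$ carries offset $j$, hence $\omega^{i_{a_j}}$, while the column-$(j+1)$ entry is the unit diagonal. Its determinant $\omega^{i_{a_j}}\oplus\omega^{i_{b_j}}$ is zero by hypothesis. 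To situate this inside Definition~\ref{def:jointly_super}, I would embed these two rows in an $n\times n$ matrix $M$ formed with $l=1$ row drawn from $A_n$ (row $j+1$) and $n-1$ rows from $B_n$ (row $j+1$ plus any $n-2$ of the remaining levels), so that both matrices contribute and $1\le l\le n-1$.

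The only point requiring care is the bookkeeping around the paper's own notion of a proper submatrix, not any genuine mathematical difficulty. I would verify that the exhibited $2\times 2$ block is not trivially rank deficient: both rows have support $2$, so after sorting, row $1$ has support $2\ge 1$ and row $2$ has support $2\ge 2$, qualifying it as proper. I would likewise confirm that $M$ is an admissible object of the definition, i.e.\ a $t\times t$ matrix with $l\in\{1,\dots,t-1\}$ drawing rows from both $A_n$ and $B_n$. Once properness and admissibility are checked, the vanishing determinant forces $(A_n,B_n)\notin\mathcal{B}_n^\omega$, and the $n=2$ necessity is simply the case $j=1$.
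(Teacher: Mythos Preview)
Your proposal is correct and follows essentially the same approach as the paper: both proofs hinge on the single $2\times 2$ submatrix built from row $j{+}1$ of $A_n$ and row $j{+}1$ of $B_n$ restricted to columns $1$ and $j{+}1$, whose determinant $\omega^{i_{a_j}}\oplus\omega^{i_{b_j}}$ vanishes exactly when $i_{a_j}=i_{b_j}$. The paper states this in one line (with a shifted index), while you additionally spell out the sufficiency direction at $n=2$ by enumerating the four row pairings and verify the properness and admissibility bookkeeping from Definition~\ref{def:jointly_super}; this extra care is warranted but does not change the underlying idea.
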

\begin{proof}
The determinant of the~$2\times 2$ submatrix~$\left[\frac{{A_n}^j_{1,j}}{ {B_n}^j_{1,j}} \right]$ is given by~$\omega^{i_{a_{j-1}}} \oplus \omega^{i_{b_{j-1}}}, \forall 1 < j \leq n$, and is only zero when~\mbox{$i_{a_{j-1}} = i_{b_{j-1}}$}.
\end{proof}

\begin{lemma}
\label{lem:join_3}
Let~$\omega\in \Omega_{f_p}$, and let~$(A_2, B_2) \in \mathcal{B}_2^\omega$. Let~$(A_3, B_3) = \phi_{\omega,i_{a_2},i_{b_2}}\left(A_2, B_2\right)$.
Then~$(A_3, B_3) \in \mathcal{B}_3^\omega$ if and only if,~$(i_{a_1},i_{a_2},i_{b_1},i_{b_2}) \in \mathcal{I}_{4}$ and satisfy:
\begin{equation}
i_{a_1} + i_{b_1} \neq i_{a_2},~~~~
i_{a_1} + i_{b_1} \neq i_{b_2},~~~~
i_{a_1} + i_{b_2} \neq i_{a_2} + i_{b_1}
\end{equation}
and~$\omega$ and~$(i_{a_1},i_{a_2},i_{b_1},i_{b_2})$ jointly satisfy:
\begin{align}
&&0 \neq & \omega^{i_{a_2}} \oplus \omega^{i_{b_2}} \oplus \omega^{i_{a_1} + i_{b_1}} \oplus \omega^{2i_{a_1}},\\
~~~~~~~~~~~&&0 \neq & \omega^{i_{a_2}} \oplus \omega^{i_{b_2}} \oplus \omega^{i_{a_1} + i_{b_1}} \oplus \omega^{2i_{b_1}}.&~~~~~~~~~~~\triangle 
\end{align}
\end{lemma}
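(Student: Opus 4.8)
The plan is to translate joint superregularity directly into a finite list of determinant conditions and then read the statement off that list. Throughout, write $a_j=\omega^{i_{a_j}}$ and $b_j=\omega^{i_{b_j}}$, so that the rows of $A_3$ are $[1,0,0]$, $[a_1,1,0]$, $[a_2,a_1,1]$ and those of $B_3$ are obtained by replacing each $a_j$ by $b_j$. By Definition~\ref{def:jointly_super}, $(A_3,B_3)\in\mathcal{B}_3^\omega$ precisely when every proper (i.e.\ not trivially rank--deficient) submatrix of every $3\times 3$ matrix assembled from $l\in\{1,2\}$ rows of $A_3$ and $3-l$ rows of $B_3$ is non--singular, where a row of index $j$ may be drawn from either matrix. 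Since $(A_2,B_2)\in\mathcal{B}_2^\omega$ already supplies $i_{a_1}\neq i_{b_1}$, and individual superregularity of $A_3,B_3$ gives $2i_{a_1}\neq i_{a_2}$ and $2i_{b_1}\neq i_{b_2}$ via Lemma~\ref{lem:super}(i), while the $n=3$ instance of Lemma~\ref{lem:join_2} gives $i_{a_2}\neq i_{b_2}$, I treat these as background constraints already in force and will see each of them re--emerge below as a particular vanishing--minor condition.

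First I would dispose of the full $3\times 3$ determinants. If the three chosen rows carry the three distinct indices $1,2,3$, the assembled matrix is unit lower triangular with determinant $1$, so it imposes nothing; if an index is repeated the only proper cases are index multisets $\{2,2,3\}$ and $\{1,3,3\}$, whose determinants both equal $a_1\oplus b_1$ (already in force), and $\{2,3,3\}$. The last case, in which two index--$3$ rows are taken, one from each matrix, is the only source of genuinely new full--size conditions. A single cofactor expansion along the top row gives
\[
\det\begin{pmatrix} a_1&1&0\\ a_2&a_1&1\\ b_2&b_1&1\end{pmatrix}=a_1^{2}\oplus a_1b_1\oplus a_2\oplus b_2,
\]
together with its analogue obtained by replacing the top row $[a_1,1,0]$ by $[b_1,1,0]$, namely $a_1b_1\oplus b_1^{2}\oplus a_2\oplus b_2$. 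After the substitution $a_j=\omega^{i_{a_j}}$, $b_j=\omega^{i_{b_j}}$ these are exactly the two displayed field conditions of the statement.

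Next I would enumerate the proper $2\times 2$ minors of the assembled matrices. Restricting two index--$3$ (or index--$2$ together with index--$3$) rows to columns $1,2$ produces the determinants $a_1b_1\oplus a_2$, $a_1b_1\oplus b_2$ and $a_2b_1\oplus a_1b_2$, whose vanishing is equivalent, via $\omega^{x}\oplus\omega^{y}=0\iff x=y$ (exponents taken modulo $2^p-1$), to $i_{a_1}+i_{b_1}=i_{a_2}$, $i_{a_1}+i_{b_1}=i_{b_2}$ and $i_{a_1}+i_{b_2}=i_{a_2}+i_{b_1}$, respectively; these are the three linear conditions in the statement. The remaining proper minors either collapse to a single power of $\omega$ (hence are automatically non--zero once the indices lie in $\mathcal{I}_4$) or reproduce a background constraint: the columns--$(1,2)$ minor of the two $A$--rows gives $a_1^{2}\oplus a_2$, that of the two $B$--rows gives $b_1^{2}\oplus b_2$, and the columns--$(1,3)$ minor of the two index--$3$ rows gives $a_2\oplus b_2$, recovering Lemma~\ref{lem:super}(i) and the $n=3$ case of Lemma~\ref{lem:join_2}.

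Necessity is then immediate, as each listed condition is the non--vanishing of one explicitly exhibited proper minor. For sufficiency I would argue exhaustiveness, and I expect this to be the main obstacle: the determinant computations are routine, but certifying that the minors above are \emph{all} the proper submatrices requires careful bookkeeping. Concretely, I would show that every other choice of rows and columns is either trivially rank--deficient—so that the sorted--support test of Definition~\ref{def:jointly_super} excludes it—or has a determinant equal to a single non--zero power of $\omega$, to $1$, or to $a_1\oplus b_1$. Carrying out this case analysis, with the support test applied correctly to decide which minors must be non--singular, shows that the three linear conditions and the two field conditions, on top of the background constraints, are jointly sufficient, which completes the equivalence.
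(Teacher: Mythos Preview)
Your proposal is correct and follows exactly the approach the paper indicates: it omits the proof of this lemma with the remark that it ``uses a similar technique as used in the proof of Lemma~\ref{lem:super},'' namely enumerating the proper submatrices that involve the newly introduced elements, computing their determinants over $\mathrm{GF}(2^p)$, and translating each non--vanishing condition either into an exponent inequality (when only two monomials survive) or into a field condition (when four survive). Your determinant computations are right, and your treatment of the background constraints inherited from $(A_2,B_2)\in\mathcal{B}_2^\omega$, individual superregularity, and Lemma~\ref{lem:join_2} is appropriate; the only labor left is the exhaustive bookkeeping you already flag, which is routine.
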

The proof of Lemma~\ref{lem:join_3} uses a similar technique as used in the proof of Lemma~\ref{lem:super}, and it has therefore been omitted.

\begin{remark}
Let~$A_n \in \mathcal{A}_n^\omega$, where~$n > 1$, then~\mbox{$(A_n,A_n^{-1}) \notin \mathcal{B}_n^\omega$}.
\end{remark}
\begin{proof}
The proof follows from the fact that~$i_1$ for~$A_n$ is equal to~$i_1$ for~$A_n^{-1}$, which does not satisfy Lemma~\ref{lem:join_2}.
\end{proof}

Jointly superregular matrices of size~$2 \times 2$ are always product preserving.
The following lemma provides necessary and sufficient conditions for product preserving jointly superregular lower triangular Toeplitz matrices of size~$3 \times 3$ and~$4 \times 4$.

\begin{lemma}
\label{lem:product_prev}
Let~$\omega\in \Omega_{f_p}$.
\begin{enumerate}
\item[i)]{Let~$(A_3, B_3) \in \mathcal{B}_3^\omega$.
Then~$(A_3, B_3) \in \mathcal{C}_3^\omega$ if and only if,~$\omega$ and~$(i_{a_1},i_{a_2},i_{b_1},i_{b_2})$ jointly satisfy:
\begin{align}
0 \neq & \omega^{i_{a_2}} \oplus \omega^{i_{b_2}} \oplus \omega^{i_{a_1} + i_{b_1}},\\
0 \neq & \omega^{i_{a_2}} \oplus \omega^{i_{b_2}} \oplus \omega^{i_{a_1} + i_{b_1}} \oplus \omega^{2i_{a_1}} \oplus \omega^{2i_{b_1}}.
\end{align}}
\item[ii)]{Let~$(A_4, B_4) \in \mathcal{B}_4^\omega$.
Then~$(A_4, B_4) \in \mathcal{C}_4^\omega$ if and only if,~$\omega$ and~$(i_{a_1},\dotsc,i_{a_3},i_{b_1},\dotsc,i_{b_3})$ jointly satisfy:
\begin{align}
0 \neq & \omega^{i_{b_1}+i_{a_3}} \oplus \omega^{i_{b_3}+i_{a_1}} \oplus \omega^{i_{a_1}+i_{a_3}} \oplus \omega^{i_{b_2}+2i_{a_1}} \\
& \oplus \omega^{2i_{b_1}+i_{a_2}} \oplus \omega^{2i_{b_2}} \oplus \omega^{2i_{b_1}+2i_{a_1}} \oplus \omega^{2i_{a_2}} \\
& \oplus \omega^{i_{b_1}+i_{b_3}} \oplus \omega^{i_{b_1}+i_{b_2}+i_{a_1}} \oplus \omega^{i_{b_1}+i_{a_1}+i_{a_2}}, \label{eq:prod_4_1}\\
0 \neq & \omega^{i_{a_3}} \oplus \omega^{i_{b_3}} \oplus \omega^{i_{b_1}+i_{a_2}} \oplus \omega^{i_{b_2}+i_{a_1}} \oplus \omega^{i_{b_1}+2i_{a_1}} \\
& \oplus \omega^{2i_{b_1}+i_{a_1}} \oplus \omega^{3i_{b_1}} \oplus \omega^{3i_{a_1}}, \label{eq:prod_4_2}\\
0 \neq & \omega^{i_{a_3}} \oplus \omega^{i_{b_3}} \oplus \omega^{i_{b_1}+i_{a_2}} \oplus \omega^{i_{b_2}+i_{a_1}}.\label{eq:prod_4_3}
\end{align}}
\end{enumerate}
\hfill$\triangle$
\end{lemma}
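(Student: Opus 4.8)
The starting point is that lower triangular Toeplitz matrices commute and that their product is again lower triangular Toeplitz. Hence $A_nB_n=B_nA_n$ holds automatically, and the membership condition defining $\mathcal{C}_n^\omega$ collapses to: \emph{given} $(A_n,B_n)\in\mathcal{B}_n^\omega$, the single matrix $C:=A_nB_n$ must be superregular. Since $C$ is lower triangular Toeplitz it is determined by its first column; writing $\alpha_k=\omega^{i_{a_k}}$, $\beta_k=\omega^{i_{b_k}}$ and $\alpha_0=\beta_0=1$, the $k$-th entry of that column is the convolution $c_k=\bigoplus_{j=0}^{k}\alpha_j\beta_{k-j}$. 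The whole problem thus reduces to deciding when the Toeplitz matrix with first column $[1,c_1,\dots,c_{n-1}]^{T}$ is superregular.

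The plan is to characterise superregularity of $C$ through its proper submatrix determinants, exactly as in the proof of Lemma~\ref{lem:super}, but keeping the $c_k$ symbolic rather than pure powers of $\omega$. For $n=3$ this yields precisely three determinant conditions, $c_1\neq 0$, $c_2\neq 0$, and $c_1^2\oplus c_2\neq 0$ (the last from the submatrix on rows $\{2,3\}$, columns $\{1,2\}$). I would discharge $c_1\neq 0$ by Lemma~\ref{lem:join_2} ($i_{a_1}\neq i_{b_1}$), substitute the convolution values into the other two, and expand using the Frobenius identity $(x\oplus y)^2=x^2\oplus y^2$; this reproduces the two displayed equations of part (i) verbatim. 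For $n=4$ the same enumeration produces seven conditions: the support conditions $c_1,c_2,c_3\neq 0$; the $2\times 2$ conditions $c_1^2\oplus c_2\neq 0$, $c_1c_2\oplus c_3\neq 0$, $c_2^2\oplus c_1c_3\neq 0$; and the single nontrivial $3\times 3$ condition $c_1^3\oplus c_3\neq 0$.

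Substituting the convolutions and expanding again via Frobenius, I would match three of these to the displayed equations: $c_3\neq 0$ is \eqref{eq:prod_4_3}, $c_1^3\oplus c_3\neq 0$ is \eqref{eq:prod_4_2}, and $c_2^2\oplus c_1c_3\neq 0$ is \eqref{eq:prod_4_1}; the last match is the heaviest, since $c_2^2\oplus c_1c_3$ expands into exactly the eleven monomials of \eqref{eq:prod_4_1}. The remaining conditions are then argued redundant: $c_1\neq 0$ is Lemma~\ref{lem:join_2}, while $c_2\neq 0$ and $c_1^2\oplus c_2\neq 0$ are \emph{verbatim} the part (i) conditions for the leading $3\times 3$ truncations $A_3B_3$. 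These are forced because the top-left $3\times 3$ block of the lower triangular product $C$ equals $A_3B_3$, so superregularity of $C$ entails superregularity of $A_3B_3$; in the only-if direction this is immediate, and in the if-direction it is consistent with the inductive $\phi$-operator build-up, in which the $3\times 3$ truncation is already taken product preserving.

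The step I expect to be the main obstacle is disposing of the mixed condition $c_1c_2\oplus c_3\neq 0$. Unlike $c_2$ and $c_1^2\oplus c_2$, this is not the determinant of any single mixed-row matrix covered by Definition~\ref{def:jointly_super}; by Cauchy--Binet it equals $\det A^{2,4}_{1,2}\oplus\det B^{2,4}_{1,2}\oplus\alpha_1\beta_1 c_1$, a XOR of three individually nonzero quantities, and so is not subsumed by joint superregularity term by term. I would attempt to derive its nonvanishing from the retained conditions via the field identities $c_1c_2\oplus c_3=(c_1^3\oplus c_3)\oplus c_1(c_1^2\oplus c_2)$ and $c_2^2\oplus c_1c_3=c_1(c_1c_2\oplus c_3)\oplus c_2(c_1^2\oplus c_2)$, using $c_1,c_2\neq 0$ together with \eqref{eq:prod_4_1} and \eqref{eq:prod_4_2}. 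Verifying that this chain of reductions actually closes --- rather than leaving $c_1c_2\oplus c_3$ as a genuinely independent constraint that must be extracted from the full collection of mixed $4\times 4$ determinants guaranteed nonsingular by Definition~\ref{def:jointly_super} --- is the delicate point on which the if-direction of part (ii) ultimately rests.
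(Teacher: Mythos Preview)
Your approach is exactly the one the paper indicates: it states only that ``the proof of Lemma~\ref{lem:product_prev} uses a similar technique as used in the proof of Lemma~\ref{lem:super}, and it has therefore been omitted,'' so writing the first column of $C=A_nB_n$ as the convolution $c_k=\bigoplus_j\alpha_j\beta_{k-j}$ and running the proper-submatrix determinant check on $C$ is precisely the intended route. Your enumeration of the seven determinant conditions for $n=4$ is correct, and your matching of $c_3$, $c_1^3\oplus c_3$, and $c_2^2\oplus c_1c_3$ to \eqref{eq:prod_4_3}, \eqref{eq:prod_4_2}, \eqref{eq:prod_4_1} via Frobenius is accurate; the eleven-monomial expansion of $c_2^2\oplus c_1c_3$ does reproduce \eqref{eq:prod_4_1} term for term.

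There is, however, a gap in your if-direction for part~(ii), and it sits earlier than the $c_1c_2\oplus c_3$ issue you flag. You discharge $c_2\neq 0$ and $c_1^2\oplus c_2\neq 0$ by asserting that the $3\times 3$ truncation is already product preserving, but the hypothesis of part~(ii) is only $(A_4,B_4)\in\mathcal{B}_4^\omega$; joint superregularity at size~$4$ implies joint superregularity at size~$3$, yet part~(i) is precisely the statement that $\mathcal{B}_3^\omega$ does \emph{not} force $\mathcal{C}_3^\omega$. Neither $c_2=\alpha_2\oplus\alpha_1\beta_1\oplus\beta_2$ nor $c_1^2\oplus c_2$ nor $c_1c_2\oplus c_3$ arises as the determinant of any single mixed proper submatrix in the sense of Definition~\ref{def:jointly_super}, and your two field identities relate these quantities without forcing any of them nonzero: indeed one checks that $c_1c_2\oplus c_3=0$ is consistent with $c_1,c_2\neq 0$, $c_1^2\oplus c_2\neq 0$, and all three of \eqref{eq:prod_4_1}--\eqref{eq:prod_4_3} simultaneously. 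So the reduction you propose does not close. Either the lemma is meant to be read inductively in the style of Lemma~\ref{lem:super}(ii)--(iii), with $(A_3,B_3)\in\mathcal{C}_3^\omega$ tacitly assumed and the $c_1c_2\oplus c_3$ condition still to be accounted for, or the literal statement is missing conditions; the paper's omitted proof does not resolve which.
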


The proof of Lemma~\ref{lem:product_prev} uses a similar technique as used in the proof of Lemma~\ref{lem:super}, and it has therefore been omitted.

\section{Greedy algorithm}
We present a greedy algorithm for an~$n \times n$ superregular lower triangular Toeplitz matrix.
The algorithm is illustrated in Algorithm~\ref{alg:greedy_search_with_back_tracking}.
The algorithm starts by searching for a~$2 \times 2$ superregular matrix.
When a~$l \times l$ superregular matrix is found, the algorithm will search for a~$l + 1 \times l + 1$ superregular matrix by extending the~$l \times l$ matrix using the~$\phi$--operator and~$i_l$.
\begin{algorithm}
\caption{Greedy search with backtracking for an~$n \times n$ superregular lower triangular Toeplitz matrix}
\label{alg:greedy_search_with_back_tracking}
\begin{algorithmic}[1]
\REQUIRE $n \geq 2$,~$\omega \in \Omega_{f_p}$,~$A_1 = 1$
\STATE $l = 2$
\WHILE {$l \leq n$}
\STATE $h := 0$
\WHILE {$h < 2^p - 1$}
\STATE $i_{l-1} := h$
\STATE $A_l := \phi_{\omega,h}\left(A_{l-1}\right)$
\STATE Define~$\mathcal{A}_l$ using \eqref{eq:set_of_all_submatrices}
\IF {$\nexists A' \in \mathcal{A}_l$ such that~$\det(A') = 0$}
\STATE $l := l + 1$
\STATE \textbf{go to 2}
\ENDIF
\STATE $h := h + 1$
\ENDWHILE
\IF {$l = 2$}
\RETURN \textit{Insufficient field size}
\ELSE
\STATE $l := l - 1$
\STATE $h := i_{l-1} + 1~$
\STATE \textbf{go to 4}
\ENDIF
\ENDWHILE
\RETURN $A_n$
\end{algorithmic}
\end{algorithm}
The search is implemented by having~$i_l$ running through all the elements of the finite field, except the last element.
The last element,~$2^p-1$, is excluded since~$\omega^0 = \omega^{2^p-1}$, where~\mbox{$\omega \in \Omega_{f_p}$}.
This method is used until an~$n \times n$ superregular matrix is found, provided that the field size is sufficiently large.
If~\mbox{$\nexists A_{l+1}$} such that \mbox{$A_{l+1} = \phi_{\omega,i_l}(A_l) \in \mathcal{A}^\omega_{l+1}, i_l \in \mathcal{I}_1, l < n$} then backtracking is required.
That is, without backtracking the algorithm could reach a~$l \times l$ matrix, where~$l < n$, that cannot be extended further.
In case of such an event, then~$i_{l-1}$ is set to the next element and the resulting matrix is tested for superregularity.
Under sufficiently large field size the algorithm is guaranteed to find an~$n \times n$ superregular lower triangular Toeplitz matrix.
In the worst case, the algorithm will fail after having checked all possible combinations of~$\mathcal{I}_{j},j\in\{1,\dotsc,n-1\}$ before returning \textit{Insufficient field size}.
%
On an Intel~2.3~GHz \mbox{Core~i5~(I5--2415M)} our single threaded implementation of the algorithm requires less than~$230$~ms to find a~$9 \times 9$ superregular lower triangular Toeplitz matrix over~$\mathrm{GF}(2^8)$.
Furthermore, without backtracking our experiments show that the algorithm will at most work for~$n = 9$ over~$\mathrm{GF}(2^8)$.
\begin{align}
\mathcal{A}_l := &\Big\{{A_l}^{j_1,\dotsc,j_s}_{k_1,\dotsc,k_s} \in \left[\mathrm{GF}(2^p)\right]^{s \times s} : s = 2, \dotsc, l - 1, \\
&~~~ 1 \leq j_1 < \dotsc < j_s = l, 1 = k_1 < \dotsc < k_s \leq l, \\
&~~~ j_t \geq k_t, \forall t \Big\} \label{eq:set_of_all_submatrices}
\end{align}

\section{Examples of coding matrices}
\label{sec:examples}
We now present two superregular~$10\times 10$ matrices, where~$p=8$,~$f_p(\omega) = \omega^8 + \omega^4 + \omega^3 + \omega^2 + 1$, where~\mbox{$\omega \in \Omega_{f_p} = \{2, 4, 16, 29, 76, 95, 133, 157\}$}.
The matrices are shown in Equations \eqref{eq:10x10_M} and \eqref{eq:10x10_N}.
The two matrices have identical performance with respect to decoding capabilities, since they are both superregular.
However,~$A_{10}'$ outperforms~$A_{10}$ with respect to encoding and decoding throughput.
Our experiments of encoding and decoding data packets of~$1600$~bytes using~$A_{10}$ and~$A_{10}'$ show a throughput gain of~$22$~\%.
The gain in throughput comes from the fact that when an element equals~$1$, there is no need for multiplication during the encoding and decoding process.
Inspecting~\eqref{eq:10x10_M} and~\eqref{eq:10x10_N} reveals that~$A_{10}'$ has~\mbox{$i_2 = i_3 = 0$}, which in turn ensures that~$15$ of the matrix elements below the diagonal are~$1$.
Whereas,~$A_{10}$ has no elements below the diagonal that are~$1$.
Equations~\eqref{eq:10x10_M_first_column} and~\eqref{eq:10x10_N_first_column} show the first column of~$A_{10}$ and~$A_{10}'$ respectively, with~$\omega = 2$.
Given their structure these matrices are superregular for~\mbox{$n \leq 10$}.
\begin{align}
A_{10}  &= \psi_\omega\left(125, 35, 109, 219, 83, 177, 191, 39, 23\right) \label{eq:10x10_M}\\
A_{10}' &= \psi_\omega\left(  1,  0,   0,   3,  5,  10,  36, 86, 83\right) \label{eq:10x10_N}
\end{align}
\begin{align}
{A_{10}}^{1,\dotsc,10}_1  &= \left[1, 51, 156, 189,  86, 187, 219,  65,  53, 201\right]^T \label{eq:10x10_M_first_column}\\
{A_{10}'}^{1,\dotsc,10}_1 &= \left[1,  2,   1,   1,   8,  32, 116,  37, 177, 187\right]^T \label{eq:10x10_N_first_column}
\end{align}

In addition to the two superregular matrices, we also present two~$6 \times 6$ jointly superregular matrices.
These matrices are jointly superregular over~$\mathrm{GF}(2^8)$ using the previous~$f_p(\omega)$ and its roots~$\Omega_{f_p}$.
Furthermore, the two matrices are not only jointly superregular but they are also product preserving.
The matrices are shown in Equations~\eqref{eq:6x6_jointly_1} and~\eqref{eq:6x6_jointly_2}.
Note that the matrices have several parameters that are~0.
A consequence of the lower triangular Toeplitz structure of the matrices is that they are product preserving jointly superregular for any block of size~$n \leq 6$.
\begin{align}
A_{6}  &= \psi_\omega\left(0, 2, 5, 0, 15\right) \label{eq:6x6_jointly_1}\\
A_{6}' &= \psi_\omega\left(1, 0, 4, 9, 30\right) \label{eq:6x6_jointly_2}
\end{align}

Finally, we present two~$7 \times 7$ jointly superregular matrices, shown in Equations~\eqref{eq:7x7_jointly_1} and~\eqref{eq:7x7_jointly_2}.
These two matrices are not product preserving.
However, they are jointly superregular matrices for any~$n \leq 7$, due to the matrix structure.
\begin{align}
A_{7}  &= \psi_\omega\left(6, 0, 0,  4, 136, 133\right) \label{eq:7x7_jointly_1}\\
A_{7}' &= \psi_\omega\left(7, 2, 3, 11,  77, 157\right) \label{eq:7x7_jointly_2}
\end{align}

\section{Conclusions}
This paper has delivered explicit matrix constructions for superregular matrices.
We also presented a greedy algorithm for larger superregular matrices.
The matrix attributes \textit{joint superregularity} and \textit{product preserving joint superregularity} are defined for lower triangular matrices.
Furthermore, explicit matrix constructions for matrices with the two attributes are provided.
We demonstrated the applicability of (product preserving) jointly superregular matrices, with use--cases such as intermediate recoding or codes with a rate lower than~$\nicefrac{1}{2}$, respectively.
Both use--cases benefit greatly from optimal decoding capabilities.
We also exposed some general attributes of (jointly) superregular matrices.
All of the methods presented in this paper can be implemented on embedded devices.
The field size and matrix dimensions used in the example section are feasible even on low--power devices with limited instruction sets.
All the presented matrices still provide optimal decoding capabilities.
Finally, we showed that the parameters of a lower triangular Toeplitz superregular matrix have a significant impact on the throughput performance of an implementation.

\appendix[Proof of Lemma~\ref{lem:super}]

\begin{enumerate}
\item[i)]{The determinants of the proper submatrices of~$A_3^\omega$ are:~$\omega^{i_1}$ and~$\omega^{2i_1} \oplus \omega^{i_2}$, where~$\omega^{i_1} \neq 0$. Since~$\omega$ is primitive,~$\omega^{i}\neq \omega^j$, if~$(i,j)\in\mathcal{I}_2, i\neq j$. Thus,~\mbox{$\omega^{2i_1} \oplus \omega^{i_2} \neq 0 \Leftrightarrow 2i_1 \neq i_2$~(modulo~$2^p-1$)}.}
\item[ii)]{We only need to check the determinants of the proper submatrices that include the new element~$\omega^{i_3}$. Since~$\omega$ is primitive, it is easy to obtain~\eqref{eq:A4_eq}.}
\item[iii)]{It is easy to obtain the determinant expressions of the proper submatrices that include the new element~$\omega^{i_4}$.
These expressions contains terms on form~$\omega^i \oplus \omega^i$.
Since arithmetic operations are wrt.\  $\mathrm{GF}(2^p)$,~\mbox{$\omega^i \oplus \omega^i = 0,\forall\omega,\forall i$}, and we obtain~\eqref{eq:A5_eq} and~\eqref{eq:A5_eq_2}.}
\end{enumerate}


\end{document}